\definecolor{refkey}{gray}{.78}
\definecolor{labelkey}{gray}{.78}
\theoremstyle{plain}
\let\so=\o
\newtheorem{Theorem}{Theorem}[section]
\newtheorem{Lemma}[Theorem]{Lemma}
\newtheorem{Proposition}[Theorem]{Proposition}
\newtheorem{Corollary}[Theorem]{Corollary}
\newtheorem{Claim}[Theorem]{Claim}
\newtheorem{Warning}{Warning}[section]
\theoremstyle{remark}
\newtheorem{Definition}[Theorem]{Definition}
\newtheorem{Remark}[Theorem]{Remark}
 \definecolor{darkgreen}{rgb}{0,0.4,0}
\definecolor{light}{gray}{0.9}
\newcommand{\cA}{\ensuremath{\mathcal A}}
\newcommand{\cB}{\ensuremath{\mathcal B}}
\newcommand{\cC}{\ensuremath{\mathcal C}}
\newcommand{\cE}{\ensuremath{\mathcal E}}
\newcommand{\cF}{\ensuremath{\mathcal F}}
\newcommand{\cG}{\ensuremath{\mathcal G}}
\newcommand{\cH}{\ensuremath{\mathcal H}}
\newcommand{\cL}{\ensuremath{\mathcal L}}
\newcommand{\cN}{\ensuremath{\mathcal N}}
\newcommand{\cP}{\ensuremath{\mathcal P}}
\newcommand{\cV}{\ensuremath{\mathcal V}}
\newcommand{\cW}{\ensuremath{\mathcal W}}
\newcommand{\bbB}{{\ensuremath{\mathbb B}} }
\newcommand{\bbE}{{\ensuremath{\mathbb E}} }
\newcommand{\bbG}{{\ensuremath{\mathbb G}} }
\newcommand{\bbI}{{\ensuremath{\mathbb I}} }
\newcommand{\bbL}{{\ensuremath{\mathbb L}} }
\newcommand{\bbN}{{\ensuremath{\mathbb N}} }
\newcommand{\bbR}{{\ensuremath{\mathbb R}} }
\newcommand{\bbZ}{{\ensuremath{\mathbb Z}} }
\let\a=\alpha \let\b=\beta   \let\d=\delta  \let\e=\varepsilon
 \let\g=\gamma       \let\l=\lambda
      \let\o=\omega      
  \let\s=\sigma \let\t=\tau   
  \let\z=\zeta
\let\D=\Delta   \let\G=\Gamma  \let\L=\Lambda 
\let\O=\Omega      
\newcommand{\rosso}{\textcolor{black}} 
\newcommand{\da}{\downarrow}
\newcommand{\toup}{\rightharpoonup}
\newcommand{\be}{\begin{equation}}
\newcommand{\en}{\end{equation}}
\newcommand{\bes}{\begin{equation*}}
\newcommand{\ens}{\end{equation*}}
\thanks{This work has been partially supported by the ERC Starting Grant 680275 MALIG and by the Grant  PRIN 20155PAWZB "Large Scale Random Structures"}
\author[A.~Faggionato]{Alessandra Faggionato}
\address{Alessandra Faggionato.
  Dipartimento di Matematica, Universit\`a di Roma `La Sapienza'
  P.le Aldo Moro 2, 00185 Roma, Italy}
\email{faggiona@mat.uniroma1.it}
\newcommand{\ra}{\rangle}
\newcommand{\la}{\langle}
\title[Conductivity  of random resistor networks]{Scaling limit of the directional conductivity of random resistor networks on simple point processes}
\begin{document}
\maketitle

\begin{abstract}  We consider  random resistor networks with nodes given by a  point process on  $\mathbb{R}^d$ and with random conductances. The length range of the     electrical filaments   can be unbounded.  We assume that the randomness is stationary and ergodic w.r.t.~the action of the group  $\mathbb{G}$,  given by  $\mathbb{R}^d$ or $\mathbb{Z}^d$. This action is covariant w.r.t.~ translations on the Euclidean space. 
Under  minimal assumptions  we prove that a.s. the suitably rescaled directional conductivity of the resistor network   along the principal directions  of the effective homogenized matrix $D$ converges to the corresponding eigenvalue of $D$ times  the intensity of the  point process.   \rosso{More generally, we prove a quenched scaling limit of the directional conductivity along any vector $e\in  {\rm Ker}(D) \cup {\rm Ker}(D)^\perp$}.
  Our results  cover plenty of models including e.g. the standard conductance model on $\mathbb{Z}^d$ \rosso{(also with long filaments)}, the Miller-Abrahams resistor network for conduction in amorphous solids (to which we can now extend the bounds  in agreement with Mott's law   previously  obtained in   \cite{CP1,FM,FSS} for Mott's random walk),    resistor networks on the supercritical  cluster in lattice and continuum percolations,  resistor networks  on crystal lattices and  on Delaunay triangulations.

\smallskip

\noindent 
{\em Keywords}: simple  point process,  resistor network, Miller-Abrahams random resistor network,  random conductance model, discrete and continuum supercritical percolation, stochastic homogenization, 2-scale convergence.

\smallskip

\noindent
{\em AMS 2010 Subject Classification}: 
60G55, 
74Q05, 
82D30 

\end{abstract}



\section{Introduction}
Random resistor networks in $\bbR^d$ are an effective tool to investigate transport in disordered media and have been much investigated both in Physics and Probability (cf.~e.g.~\cite{Bi,Ke,Kirk}).  
Randomness can affect both the conductances of the electric filaments and the location of the nodes. It describes  micro-inhomogeneities which can be of different physical nature. For example, one can consider mixtures of conducting and non conducting materials (cf.~\cite[Section~II]{Kirk}),  thus motivating the study of  resistor networks on percolation clusters. One can also consider amorphous solids as  doped semiconductors  
in the regime of strong Anderson localization. In this case the doping impurities have random positions $x_i$ (described mathematically by a simple\footnote{As in  \cite{DV} the adjective \emph{simple} just means that points have unit multiplicity.}
 point process)   and  the Mott's variable range hopping (v.r.h.) of the  conducting electrons can be modeled by the  Miller-Abrahams (MA) random resistor network. In this network   nodes are given by the impurities and, between any  pair of impurities located at $x_i$ and $x_j$,   there is an electrical filament of conductance 
\be\label{birillo}
c_{x_i,x_j}(\o):=\exp\Big\{ - \frac{2}{\gamma} |x_i-x_j| -\frac{\b}{2} ( |E_i|+ |E_j|+ |E_i-E_j|) \Big\}
\,,
\en
where $\g$ is the localization length, $\b=1/kT$ is the inverse temperature and the $E_i$'s are random energy marks associated to the impurities  (cf.~\cite{AHL,MA,POF,SE} and Section \ref{sec_examples} below). Typically,  the energy marks are taken as i.i.d. random variables with distribution   $p(E) \propto |E|^{\a}dE$ on an interval containing the origin, for some $\a\geq 0$.
Mott's v.r.h. was   introduced by Mott  to explain the anomalous low-temperature conductivity decay, 
  which (for isotropic materials) behaves as 
\be\label{mott-law} 
\s(\b) \asymp A \exp \left\{-c \b^\frac{\a+1}{\a+1+d}   \right\}\,,
\en
where  $A$ has a negligible $\b$-dependence, while  $c>0$ is $\beta $-independent (Mott considered the case $\a=0$, while Efros and Shklovskii introduced  $\a$  to  model a possible Coulomb pseudogap in the density of states).  Eq.~\eqref{mott-law}
  is  usually named Mott's law. We refer to Mott's Nobel Lecture \cite{NL} and  the monographs \cite{POF,SE} for more details.

As in Mott's law, 
a fundamental physical quantity is given by the conductivity of the resistor network along a given direction. One considers a box centered at the origin of $\bbR^d$ with two opposite faces orthogonal to the fixed direction, where the electric potential takes value $0$ and $1$, respectively. Then the conductivity is the total electric current  flowing across any section  orthogonal to the given direction  and equals the total dissipated energy (cf.~\cite[Section~1.3]{Bi}, \cite[Section~11]{Ke} and Section~\ref{MM} below).

In this work  we consider   generic random resistor networks on $\bbR^d$ with random conductances and nodes at random positions, hence described by a  simple point process. The electric filaments can be arbitrarily long. We assume that the randomness is stationary and ergodic w.r.t.~the action of the
group  $\mathbb{G}$,  given by  $\mathbb{R}^d$ or $\mathbb{Z}^d$. This action is covariant w.r.t.~the action 
of $\bbG$ by translations on the Euclidean space.   Under  minimal assumptions
we prove that, as the  box size diverges,  a.s. the suitably rescaled directional conductivity of the resistor network along the principal directions  of the effective homogenized matrix $D$ converges to the corresponding eigenvalue of $D$ times  the intensity of the simple point process \rosso{(cf.~Corollary~\ref{cor_airone})}. \rosso{More generally, we prove a quenched scaling limit of the directional conductivity along any vector $e\in  {\rm Ker}(D) \cup {\rm Ker}(D)^\perp$ (cf.~Theorems~\ref{teo1} and \ref{teo3})}.   $D$ admits a variational characterization which can be used to get upper and lower bounds (cf.~e.g.~\cite{demasi,FM,FSS}). \rosso{Information} on the limiting behavior of the electrical potential is provided in  Proposition~\ref{teo2}. We point out that our finite-moment conditions (A7)  in Section \ref{MM} are the optimal ones, as they are necessary to define the effective homogenized matrix (integrability of $\l_2$ in (A7))
  and the
space  of square integrable forms (integrability of $\l_0$ in (A7)).

Our target has been to achieve a  universal qualitative result, hence  our theorems apply to a large variety of geometric structures.  In particular we obtain the 
scaling limit of the conductivity for the resistor network on the  supercritical percolation cluster on $\bbZ^d$, providing a solution to 
the open Problem~1.18 in \cite{Bi}   and going far beyond Bernoulli bond percolation (cf. also  \cite{Abe,CC,GK,Ke} and references therein). As a byproduct, we get also  a proof of the strictly positivity of $D$ for this model alternative to the original one in \cite{demasi}  (in general, our results can be used to derive the non degeneracy of $D$  from information on  the disjoint crossings in the resistor network as in Section~\ref{ciuffolotto}). 
 Our results cover also the MA resistor network 
 allowing to extend to its  asymptotic directional  conductivities the bounds  in agreement with Mott's law \eqref{mott-law}   previously  obtained in 
 \cite{CP1,FM,FSS} for  Mott's random walk as detailed in Corollary \ref{daie} in Section \ref{sec_examples}  (Theorem \ref{teo1} will be  used also to fully prove Mott's law for several environments in ~\cite{Mott_sinfonia}). For  the standard conductance model on $\bbZ^d$ our results improve  the existing ones (see~\cite{Koz} and the discussion below). The above examples, \rosso{admissible stationary stochastic lattices  and periodic structures are treated in 
  Section \ref{sec_examples},  together with   a brief discussion of} other examples, as \rosso{random} resistor networks    on crystal lattices,  on Delaunay triangulations \cite{FT}, on  supercritical clusters in continuum percolation \rosso{\cite{MR}}.
We point out that  there isn't a   prototype random resistor network  to deal with as a leading example.   For example, the underlying graph of the MA resistor network is the complete graph on an infinite simple point process, which is completely different from  the  supercritical percolation cluster \rosso{on $\bbZ^d$}.  This geometric heterogeneity requires a  geometric  abstract setting to get   the desired universality. 
 
 Our  proof is based on stochastic homogenization via 2-scale convergence \rosso{and the theory of simple point processes.  2-scale convergence has been introduced in \cite{A,Nu} for rapidly oscillating operators (and further extended e.g. also in \cite{Z}). 
 Stochastic 2-scale convergence in mean has been introduced  in \cite{BMW}. Stochastic 2-scale convergence (not in mean) has been   developed   in \cite{ZP} to treat random singular structures as networks and random measures (inspired also by \cite{Z}). The definition given in  \cite{ZP} uses the Palm distribution associated to the random measure and is built  upon the pointwise ergodic theorem  (instead of convergence in mean). 
   In \cite{Fhom3} we have further extended the analysis in \cite{ZP}    to treat reversible long-range random walks on simple point processes (for nearest-neighbor random walks on $\bbZ^d$ see also \cite{MP}). Differently from \cite{ZP} where the gradient of a function is  a   vector-valued function,   in \cite{Fhom3} the variation of a function along all possible jumps is encoded in an  amorphous  gradient (cf.~Section~\ref{MM} below), requiring a separate definition of  2-scale convergence (cf.~Definitions~\ref{priscilla} and \ref{sandalo65} below). Moreover, in \cite{Fhom3} and here as well, we do not restrict to probability spaces which are  compact metric spaces as in \cite{ZP} and  the environment-dependent  test-functions and test-forms in the definition of 2-scale convergence have to be  chosen carefully. In the rest we will give a self-contained discussion of the    stochastic  2-scale convergence used here (see Sections \ref{sec_tipetto} and \ref{anatre12}). }
    We recall that in \cite{ZP} Piatnitski and Zhikov  have  proved   homogenization  for the massive Poisson equation $\l u+ \bbL u=f$ by 2-scale convergence  on bounded domains also with mixed Dirichlet-Neumann b.c., $\bbL$ being the generator of a  diffusion in random environments. 
  In  \cite[Section 7]{ZP}  the above  result  has  been applied   to get that the magnitude of the  effective homogenized  matrix  $D$  equals the limiting rescaled  ``directional  conductivity" 
 for a  diffusion on the skeleton of the supercritical percolation cluster in Bernoulli bond percolation. The proof relies on the 
     a priori check  that $D>0$, based on previous results on  left--right crossings valid in the Bernoulli case.    
We have developed here a direct proof of the  scaling limit of the direction conductivity, which  avoids  the   constraint  $D>0$ (whose check usually requires further assumptions)
    and  previous investigations of the massive Poisson equation (which would require the  cut-off procedures developed in  \cite[Sections~15,17]{Fhom3} in order to deal with  arbitrarily long conductances).
    We refer to \cite[Section~5]{Fhom3} for sufficient conditions assuring that $D>0$ in specific examples \rosso{and to \cite[Appendix~A]{F_SEP} for a model with degenerate but non zero effective homogenized matrix $D$}.
     In general, we avoid any assumption on the left-right crossings of the resistor network (usually of  difficult investigation if the FKG inequality is violated as in the MA resistor network).
 We stress that the existing proofs for random diffusions also with different b.c.~(cf.~\cite{JKO,ZP}) do not  adapt well to our general discrete setting as the presence of arbitrarily long conductances in an amorphous setting  forces to deal with amorphous local gradients, which keep trace of the function variation along any filament exiting from any given point and  which are very irregular objects.

Concerning  previous results, we point out that  the case of i.i.d. random conductances between nearest-neighbor sites of  $\bbZ^d$ with value in a fixed  interval $(\d_0,1-\d_0)$, $\d_0>0$, has been considered by \rosso{Kozlov in \cite{Koz}.}
As stated for example in \cite{BSW},  in 
the case of stationary ergodic random conductances between nearest-neighbor sites of  $\bbZ^d$ having value in $(\d_0,1-\d_0)$ ($\d_0>0$) and with potential at the boundary of the box  given by a fixed linear function, one can prove the scaling limit of the dissipated energy  by adapting  the methods developed for  the continuous case (cf.~\cite{JKO,Koz78,PV} and  the technical results  in \cite{Kunn}).  We also point out that previously, in \cite[p.\,26]{Z0},  Zhikov obtained the scaling limit of the ``directional  conductivity"  of  the standard diffusion  with partial  Dirichlet   b.c.  in   a  perforated domain built  by fattening     the supercritical percolation cluster. We point out that  the b.c. in \cite[Eq.~(1.22)]{Z0} does not correspond to  the  effective one for the resistor network on the supercritical percolation cluster as the Neumann part is missing. \rosso{Other homogenization results for  resistor networks in $\bbZ^d$ can be found in \cite{PR} and for planar polygonal networks in \cite{V} (with different b.c.).}

  \rosso{In the last decades there has been a lot of work on homogenization of  functionals on random  networks by means of $\Gamma$--convergence methods (see e.g. \cite{AC,ACG,B,BC,BF,BP,PR} and references therein). The directional conductivity and the electric potential are indeed respectively  the minimum  and the minimizers of the  energy functional associated to the resistor network (cf.~Lemma~\ref{benposto}). The boundary conditions  assumed in the literature are usually different from the ones considered in the present context and the assumptions are more restrictive in particular for the simple point process (cf. e.g.~\cite[Proposition~2.14]{PR} for the $\G$--convergence of the energy functional associated to  a resistor network on boxes of  $\bbZ^d$  with Neumann b.c. and \cite{ACG} for the $\G$--convergence of energy functionals on more general random networks  with affine b.c. and also with long-range interactions, but where interactions are only internal and not  between inside and outside  of the domain in consideration). In  Section \ref{astl} we consider  the class of  admissible stationary stochastic lattices (which is a class of models with  geometric randomness treated in \cite{BBL,ACG} but without dense clusters or big holes)  and  compare our modeling with the one developed in \cite{ACG}. We point out that our results apply as well to simple point processes with dense clusters and big holes. For  $\G$-convergence  results on energy functionals on  non regular simple point processes as the Poisson one see e.g. \cite{BC,BP}. }

As a further step of investigation we plan to derive quantitative results on the scaling of the directional conductivity at cost of   additional technical assumptions  (cf. e.g. \cite{AD} for some quantitative stochastic homogenization results on  the supercritical percolation cluster on $\bbZ^d$). 
Finally, we point out that the present work is an extension and improvement of our unpublished notes   \cite{1luglio}.\\

\noindent
{\bf Outline of the paper}. In Section \ref{MM} we present our models and main results (i.e.~Theorems~\ref{teo1}, \ref{teo3} \rosso{and} Proposition~\ref{teo2}). In Section \ref{sec_examples} we discuss some relevant examples. The rest of the paper is devoted to proofs.


\section{Models and main results}\label{MM}
We start with a probability space $(\O,\cF,\cP)$ encoding all the randomness of the system. Elements $\o$ of $\O$ are called \emph{environments}. We denote by $\bbE[\cdot]$ the expectation associated to $\cP$.

We 
 denote by  
   $\cN$ the space of  locally finite subsets $\{x_i\} \subset \bbR^d$, $d\geq 1$.
  As common, we will identify the set $\{x_i\}$  with 
 the counting measure $\sum _i  \d_{x_i}$.  In particular, if $\xi=\{x_i\}\in \cN$, then $\int d \xi(x) f(x) =\sum_i f(x_i)$ and 
 $\xi(A)= \sharp( \{x_i\}\cap A)$ for $f: \bbR^d\to \bbR$ and $A\subset \bbR^d$.
 On $\cN$ one defines a special metric $d$ (cf.~\cite[App.~A2.6]{DV})
 such that     a sequence  $(\xi_n)$ converges to  $\xi$ in $\cN$ \rosso{if and only if} 
$\lim _{n\to \infty} \int d\xi_n (x) f(x)  = \int d \xi (x)   f (x) 
$ for any bounded continuous function $f: \bbR^d \to \bbR$ vanishing outside a bounded set.
Then the 
  $\s$--algebra  of  Borel sets of $(\cN,d)$ is generated by the sets $\{\xi\in \cN\,:\, \xi (A)= k\}$ with $A$ and  $k$ varying respectively among the Borel sets of $\bbR^d$  and in $\bbN$  (cf. \cite[\rosso{Section~7.1}]{DV}). In what follows, we think of $\cN$ as    measure space endowed with the  $\s$--algebra  of  Borel sets.

We consider a \emph{simple point process} on $\bbR^d$  defined on $(\O,\cF,\cP)$, i.e.  a measurable map
$\O \ni \o \mapsto \hat \o\in \cN$.
We  also consider the \emph{group} $\bbG$ given by $\bbR^d$ or $\bbZ^d$ acting on the Euclidean space $\bbR^d$ by the translations $\t_g:\bbR^d \to \bbR^d $, where $\t_g x= x+g $. 
\begin{Warning}\label{silente89}  To simplify here the presentation, when $\bbG=\bbZ^d$   we assume that $\hat \o \subset \bbZ^d$ for all $\o \in \O$ (in Section \ref{bingo} we will remove this assumption).
\end{Warning} We assume that $\bbG$ acts also on $\O$ and, with a slight abuse  of notation made non ambiguous by the context, we denote by $(\t_g)_{g\in \bbG}$ also the action of $\bbG$ on $\O$. In particular, this action is given by a family of  $\bbG$--parametrized  maps  $\t_g:  \O\to \O$ such that
  $\t_0=\bbI$, $\t_g\circ \t_{g'}= \t_{g+g'}$ for all $g,g'\in \bbG$, $\bbG\times \O\ni (g, \o)\mapsto \t_g \o\rosso{\in\O}$ is measurable ($\bbR^d$, $\bbZ^d$ are endowed with the  Euclidean metric and the discrete topology, respectively). As common, a  subset  $A\subset \O$ is called  \emph{translation invariant} if  $\t_g A =A$ for all $g\in \bbG$. The name comes from the fact that  the action of $\bbG$ on $\O$ describes how the environment changes when  applying translations on the Euclidean space (cf.~Assumption (A4) below).
We will assume that $\cP$ is stationary and ergodic w.r.t. the action of $\bbG$ on $\O$. 
 We recall that   stationarity means  that $\cP (\t_g A)=\rosso{\cP(A)}$ for any $A\in \cF$ and $g\in \bbG$, while ergodicity means that  $\cP(A)\in\{0,1\}$
 for any  translation invariant set $A\in \cF$.

 Due  to our  assumptions stated below, the simple point process 
 has  finite positive  \emph{intensity} $m$, where 
  \begin{equation}\label{mom_palma0}
m:=\begin{cases}
 \bbE\bigl[  \hat \o (  [0,1]^d)\bigr] &\text{ if }\bbG=\bbR^d\,,\\
 \cP( 0 \in \hat \o) & \text{ if } \bbG=\bbZ^d\,.
 \end{cases}
\end{equation}
   As a consequence,  the \emph{Palm distribution} $\cP_0$ associated to the simple point process is well defined (cf.~\cite[Section~2]{Fhom3} and references therein). We recall that $\cP_0$ is the  probability measure on $(\O,\cF)$ with support in 
        \be \label{pandi}
    \O_0:=\{ \o \in \O\,:\, 0 \in \hat \o\}\,, 
   \en
  such that, for any $A\in \cF$,  
  \be\label{passaparola}
\cP_0(A):=
\begin{cases} \frac{1}{m   }\int _\O d\cP(\o) \int_{[0,1]^d} d\hat \o(x)   \mathds{1}_A(\t_{x} \o) & \text{ if } \bbG=\bbR^d\,,\\
\cP(A |\O_0)& \text{ if }
 \bbG=\bbZ^d\,.
 \end{cases}
 \en
 In the rest, we will denote by $\bbE_0[\cdot]$ the expectation w.r.t. $\cP_0$. 
  \smallskip
 
We  fix  a measurable function (describing the  \emph{random conductance field})
\[
\bbR^d\times \bbR^d \times \O \ni (x,y,\o) \mapsto c_{x,y} (\o)\in [0,+\infty)
\]
such that $c_{x,x}(\o)=0 $ for all $x\in \bbR^d$.
The value of $ c_{x,y} (\o)$ will be relevant only for $x\not =y$ in  $\hat \o$.
For later use we define the function $\l_k:\O_0\to [0,+\infty]$ as
    \begin{equation}\label{organic}
  \l_k(\o):=\int _{\bbR^d} d\hat \o (x) c_{0,x}(\o)|x|^k\,,  \end{equation}
 where   $|x|$ denotes the  Euclidean norm of $x\in \bbR^d$.

\smallskip

Recall \rosso{the} temporary assumption in Warning \ref{silente89}

\smallskip
{\bf Assumptions.}
We make the following assumptions:
\begin{itemize}
\item[(A1)] $\cP$ is stationary and ergodic w.r.t. the action of $\bbG$ on $\O$;
 \item[(A2)] the intensity $m$ given in  \eqref{mom_palma0} is finite and positive;
\item[(A3)]  $\cP ( \o \in \O:  \t_g\o\not = \t_{g'} \o   \; \forall g\not =g' \text{ in }\bbG )=1$;
\item[(A4)]    for all $\o\in \O$, $g\in \bbG$ and $x,y\in \widehat{\t_g\o}$, it holds 
\begin{align}
&\widehat{\t_g\o}  =\t_{-g}( \hat \o )  \,,\label{base}\\
& c_{x,y} (\t_g\o)= c_{\t_g x, \t_g y} (\o) \,;\label{montagna}
 \end{align}
\item[(A5)]  for all $\o\in \O$ the weights $c_{x,y}(\o)$ are symmetric, i.e. $c_{x,y}(\o)=c_{y,x}(\o)$  $\forall x,y\in \hat\o$;
\item[(A6)]    for $\cP$--a.a.~$\o$ the  graph with vertex set $\hat \o$ and edges given by $\{x,y\}$ with $x\not =y$ in $\hat \o$ and $c_{x,y}(\o)>0$ is connected;
\item[(A7)]     $\l_0, \l_2 \in L^1(\cP_0)$;
\item[(A8)]   $L^2(\cP_0)$ is separable.
\end{itemize}

 We point out that the above assumptions are  the same ones presented in \cite[Section~2.4]{Fhom3} when the rates $r_{x,y}(\o)$ there  are symmetric (hence coinciding with our $c_{x,y}(\o)$). To simplify, differently from \cite{Fhom3}, we have required some properties to hold for all $\o$, but their $\cP$-a.s. validity would suffice. 
 We now comment the above assumptions (recalling also some remarks from \cite{Fhom3}).
 
   Due to  \cite[Proposition~10.1.IV]{DV},  (A1) and  (A2), for $\cP$--a.a.~$\o$ the set $\hat \o$ is  infinite.

    \rosso{As observed in \cite[Appendix~A]{Fhom3} 
 the event in  (A3) is measurable. This is trivially true if $\bbG=\bbZ^d$. For $\bbG=\bbR^d$ note that the event equals  $\{\o\in \O\,:\, \t_g\o \not =\o \; \forall g\in \bbG\setminus\{0\}\}$ and that  by \eqref{base} its complement is given by the environments $\o$ such that $\t_g\o=\o$ for some $g\in \bbG  \setminus \{0\}$ with $g\in (\hat\o - \hat \o)$ (as detailed in \cite[Appendix~A]{Fhom3} from the last characterization one easily gets that  the complement is measurable, since points of $\hat\o$ can be enumerated in a measurable way). Trivially the event in (A3) is also translation invariant hence  ergodicity  would imply that its probability is $0$ or $1$. In (A3) we require the probability to be $1$. (A3) is satisfied in plenty of models. In periodic models where (A3)  fails,  for  free one can   add  some randomness  enlarging $\O$ to assure (A3)   (see Section  \ref{app_esempio} for details). }

 
 (A4) describes how the Euclidean translations influence the randomness. 
 
 (A5) is natural due to the interpretation of conductance of $c_{x,y}(\o)$ discussed below. 
 
 (A6) is a technical assumption assuring that a measurable function $u$ on $\O_0$ such that, $\cP_0$--a.s.,  $u(\t_x \o)=u(\o) $ for all $x\in \hat \o$ with  $c_{0,x}(\o)>0$ is  constant $\cP_0$--a.s. (cf.~\cite{Fhom3}[Lemma~8.5]).  (A6) can be weakened: in Section \ref{ciuffolotto} we will discuss a relevant example where (A6) does not hold but anyway the application of our Theorem \ref{teo1} allows to derive the scaling limit of the directional conductivity.
 
By  \cite[Theorem~4.13]{Br} (A8) is fulfilled if $(\O_0,\cF_0,\cP_0)$ is a separable  measure space where $\cF_0:=\{A\cap \O_0\,:\, A\in \cF\}$ (i.e.~there is a countable family $\cG\subset  \cF_0$ such that  the $\s$--algebra  $\cF_0$ is generated by $\cG$). For example, if $\O_0$ is a separable metric space and 
$\cF_0= \cB(\O_0)$ (which is valid if $\O$ is a separable metric space and 
$\cF= \cB(\O)$) then (cf. \cite[p.~98]{Br}) $(\O_0,\cF_0,\cP_0)$ is a separable  measure space  and (A8) is valid.
Note that we are not assuming that $\O$ is a compact metric space as in e.g. \cite{ZP}, hence  (A8) becomes relevant to have countable families  of test functions for the 2-scale convergence (we refer to  \cite{Fhom3} for further comments on this issue).


\begin{Definition}\label{fasma98}  We define the  effective homogenized   matrix $D$ as 
the $d\times d$ nonnegative symmetric matrix such that
 \begin{equation}\label{def_D}
 a \cdot Da =\inf _{ f\in L^\infty(\cP_0) } \frac{1}{2}\int d\cP_0(\o)\int d\hat \o (x) c_{0,x}(\o) \left
 (a\cdot x - \nabla f (\o, x) 
\right)^2\,,
 \end{equation}
 where $\nabla f (\o, x) := f(\t_x \o) - f(\o)$.
\end{Definition}
By (A7) the above definition is well posed. Above $a\cdot b$ denotes  the Euclidean scalar product of the vectors $a$ and $b$.

Given $\ell>0$  we
consider the box, stripe and half-stripes\footnote{The term \emph{stripe} is appropriate for $d=2$. We keep the same terminology for all dimensions $d$.}
  \begin{equation}
  \begin{cases}
  \L_\ell:=(-\ell/2,\ell/2)^d   \,, \;\;  & S_\ell:=\bbR\times (-\ell/2,\ell /2)^{d-1}   \\
  S_\ell^-:=\{ x \in S_\ell \,:\, x_1\leq -\ell/2\}\,,\;\;& S_\ell^+:=\{x \in S_\ell \,:\, x_1\geq \ell/2\}\,.
  \end{cases}
  \en 
  
\begin{Warning}\label{stellina59} 
We denote by $e_1,\dots,e_d$ the canonical basis of $\bbR^d$.  In what follows we focus on the direction determined by $e_1$, \rosso{for simplicity of notation}. 
In the general case, when considering the direction determined by a unit vector $e$,
one has just  to refer our results to the regions $O(\L_\ell)$, 
$O(S_\ell)$, $O(S_\ell^-)$ and $O(S_\ell^+)$, where $O$ is a fixed orthogonal linear  map  such that $O(e_1)=e$.
 \end{Warning}

We define $\O_1$ as the set of $\o\in \O$ satisfying 
the  connectivity  property in (A6)   and the bounds (cf.~\eqref{organic})
\begin{equation}\label{sommetta}
\l_0(\t_x \o)=\sum_{y\in \hat \o } c_{x,y}(\o) <+\infty \;\;\; \forall x \in \hat \o \,.
  \end{equation}
  Note that  $\O_1$ is a translation invariant measurable set with  $\cP(\O_1)=1$  (use (A7) and  Lemma \ref{matteo} below).

\begin{Definition}[Resistor network ${\rm (RN)}^\o_\ell$] \label{def_RN} Given $\o\in \O_1$ we consider the  $\ell$--parametrized resistor network ${\rm (RN)}^\o_\ell$  on $S_\ell$  with  node set  $\hat \o \cap S_\ell$. 
To each unordered pair  of nodes $\{x,y\}$, such that  $\{x,y\}\cap  \L_\ell \not=\emptyset$ and $c_{x,y}(\o)>0$,
we associate an electrical filament of conductance  $c_{x,y}(\o)$ (see Figure \ref{messicano1}-(left)).
\end{Definition}
 We can think  of ${\rm (RN)}^\o_\ell$ as a weighted non-oriented graph  
with vertex set $ \hat \o \cap S_\ell$, edge set 
\be\label{latticini}
\bbB^\o_\ell:=\bigl\{\{x,y\}\subset( \hat \o \cap S_\ell) \,:\, \{x,y\}\cap  \L_\ell \not=\emptyset\,,\;  c_{x,y}(\o)>0 \bigr\}
\en  and weight  of the edge $\{x,y\}$ given by the conductance $c_{x,y}(\o)$.

\begin{figure}
\includegraphics[scale=0.34]{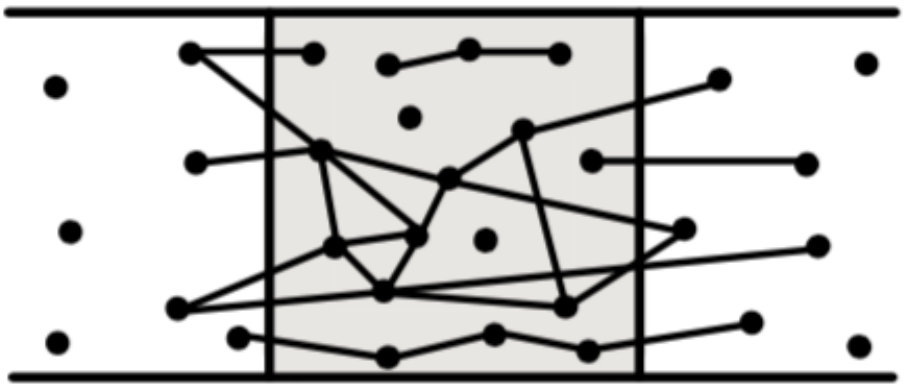}\;\;\qquad\includegraphics[scale=0.34]{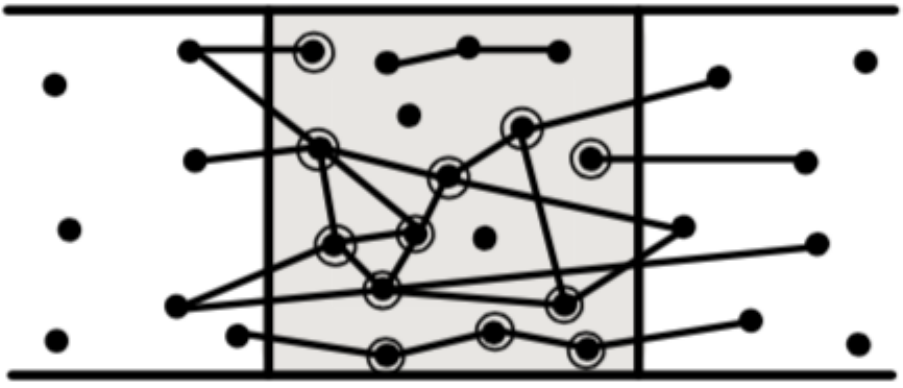}
\caption{(Left)  The resistor network ${\rm (RN)}^\o_\ell$, conductances  are omitted. The box and the stripe correspond to  $\L_\ell$ and   $S_\ell$, respectively. (Right) Functions in $H^{1,\e}_{0,\o}$ take any value on the nodes surrounded by a circle, and have value zero on all other nodes (the box corresponds here to $\L$  and has side length $1$, the underlying graph is $\cG^\e_\o$). } \label{messicano1}
\end{figure}

\begin{Definition}[Electric potential] \label{def_EP} Given $\o\in \O_1$  we denote by   $V_\ell^\o$  the \emph{electric potential} of the resistor network {\rm (RN)$^\o_\ell$} with values $0$ and $1$ on $S_\ell^- $ and $S_\ell^+$, respectively,  taken by convention equal to zero on the connected components  of ${\rm (RN)}^\o_\ell$ included in $\L_\ell$. In particular, $V_\ell^\o$ is the unique function $V_\ell^\o: \hat \o \cap S_\ell\to \bbR$ such that 
\be\label{thenero}
{\sum} _{y\in \hat \o \cap S_\ell} c_{x,y}(\o) \left(  V^\o _\ell (y)-  V^\o _\ell (x)\right) =0 \qquad \forall 
x\in \hat \o \cap \L_\ell\,,
\en
and satisfying
\be\label{thebianco}
\begin{cases}
 V^\o_\ell (x) = 0  & \text{ if } x \in \hat \o \cap \L_\ell, \text{ $x$ is  not connected to  $\hat \o \setminus \L_\ell$ in {\rm (RN)$^\o_\ell$}}\,,\\
 V^\o_\ell (x) = 0  & \text{ if } x \in \hat \o \cap S^-_\ell\,,\\
 V^\o_\ell (x) = 1  & \text{ if } x \in \hat \o \cap S_\ell^+\,.
\end{cases}
\en
\end{Definition}
Note that \eqref{thenero} corresponds to Kirchhoff's law.
As discussed in Section \ref{sec_hilbert}, the above  electric  potential exists and is unique   and has values in $[0,1]$. \rosso{Since the boundary conditions \eqref{thebianco} fix the value of $V_\ell^\o(x)$ apart from a finite set of points $x$, the function $V_\ell^\o$ has to be bounded and therefore, due to \eqref{sommetta}, equation \eqref{thenero} is well posed.}
We recall that, given $(x,y)$ with $\{x,y\}\in \bbB^\o_\ell$ (cf. \eqref{latticini}),  
\be\label{ahahah} i_{x,y}(\o):=c_{x,y}(\o) \bigl( V^\o_\ell(y)- V^\o _\ell (x) \bigr)
\en is the electric current flowing from $x$ to $y$ under the electric potential $V^\o_\ell$, due to  Ohm's law. For simplicity,   $\ell$ is \rosso{omitted} in the notation $i_{x,y}(\o)$.

\begin{Remark}\rosso{Note our convention that current flows uphill w.r.t. $V^\o_\ell$ (as e.g. in \cite{G_book}). The physical electrical potential would be $1-V_\ell^\o$, hence equal to $1$ on $S_\ell^-$ and to $0$ on $S_\ell^+$.}
\end{Remark}

\begin{Definition}[Directional effective conductivity]  \label{def_EC}Given $\o\in \O_1$ we call $\s_\ell(\o)$ the \emph{effective conductivity} of the resistor network  $({\rm RN})^\o_\ell$ along the first direction under the electric potential $V^\o_\ell$.  More precisely,  $\s_\ell (\o)$ is given by
 \[ \label{ide1}
\begin{split}
 \s_\ell(\o)&: =  \sum _{x\in \hat \o \cap S_\ell^-}
   \; \sum_{y \in  \hat \o \cap \L_\ell} i_{x,y}(\o) =  \sum _{x\in \hat \o \cap S_\ell^-}\sum_{y \in  \hat \o \cap \L_\ell} c_{x,y} (\o) \bigl(V_\ell^\o (y)-V_\ell^\o(x)\bigr)\,.
\end{split}
 \]
\end{Definition} 
It is simple to check that, for any $\g \in [-\ell/2, \ell/2)$, 
 $\s_\ell(\o)$ equals the current flowing through the hyperplane $\{x\in \bbR^d\,:\, x_1=\g\}$:
\begin{equation}\label{eq_ailo0}
\s_\ell(\o)=\sum _{\substack{x\in \hat \o \cap S_\ell:\\
 x_1\leq \g }} \; \sum_{\substack{ y \in  \hat \o \cap S_\ell:\\
 \{x,y\} \in \bbB^\o_\ell, \, y_1>\g
 }} i_{x,y}(\o) \,.
 \end{equation}
 $\s_\ell(\o)$   also equals  the total dissipated energy:
\begin{equation}\label{eq_ailo}
\s_\ell(\o)= \sum_{ \{x,y\} \in \bbB^\o_\ell } c_{x,y}(\o) \bigl( V^\o _\ell (x) - V^\o _\ell (y) \bigr)^2\,.
\end{equation}
Indeed, by collapsing all nodes in $\hat \o \cap S_\ell^-$ into a single node and similarly for $\hat \o \cap S_\ell^+$, one reduces to the same setting of  \cite[Section~1.3]{DS} where \eqref{eq_ailo} is proved.

\smallskip
\rosso{To state our main results we give a quick and  rough definition of the  gradient  $\nabla_*$ and the function $\psi$ when  $e_1 \in {\rm Ker}(D)^\perp={\rm Ran}(D)$, 
 referring to Section \ref{eff_equation} for a precise treatment. 
$\nabla_* $  denotes the  weak gradient along ${\rm Ker}(D)^\perp$. If $\varphi$ is a  regular function, then  $\nabla_* \varphi$ coincides with the orthogonal projection of $\nabla \varphi $ on ${\rm Ker}(D)^\perp$.   $\psi$ denotes the unique weak solution on $\L:=(-1/2,1/2)^d$  of the equation $ \nabla_* \cdot ( D \nabla_* \psi) =0$ with the following mixed Dirichlet-Neumann boundary conditions: $\psi$ equals  zero on $\{x\in \bar \L: x_1=-1/2\}$ and one on $\{x\in \bar  \L: x_1=1/2\}$, while $ D \nabla_* \psi (x) \cdot  \mathbf{n}(x) =0 $  on the other faces of $\L$, where $n(x)$ is the outward normal vector. 
}


We can now state our first main result concerning the infinite volume asymptotics of $\s_\ell (\o)$ (the proof is given in Sections \ref{renato_zero} and \ref{limitone}):
\begin{Theorem}\label{teo1} 
 \rosso{There  exists a translation invariant measurable set $\O_{\rm typ}\subset \O_1$  with $\cP(\O_{\rm typ})=1$ such that for all $\o\in \O_{\rm typ}$ the following holds:
\begin{itemize}
\item[(i)] if $e_1 \in {\rm Ker}(D)$, then  $\lim _{\ell \to +\infty} \ell^{2-d} \s_\ell(\o)= 0$;
\item[(ii)] if $e_1 \in {\rm Ker}(D)^\perp$, then  $\lim _{\ell \to +\infty} \ell^{2-d} \s_\ell(\o)= m \int _\L D \nabla_*\psi(x)\cdot \nabla_*\psi(x)  dx$.
\end{itemize}}
\end{Theorem}
The notation $\O_{\rm typ}$ refers to the fact that elements of $\O_{\rm typ}$ are typical environments, as $\cP(\O_{\rm typ})=1$. 

\begin{Corollary}\label{cor_airone}
\rosso{Suppose that $e_1$ is  an eigenvector of $D$. Then  for all $\o\in \O_{\rm typ}$ it holds $\lim _{\ell \to +\infty} \ell^{2-d} \s_\ell(\o)= m D_{1,1}$.}
\end{Corollary}
\begin{proof}
\rosso{If $e_1$  has zero eigenvalue, then  $e_1 \in {\rm Ker}(D)$,  $D_{1,1}=0$ and therefore the claim is equivalent to Theorem \ref{teo1}-(i). If $e_1$ has positive eigenvalue, then $e_1 \in {\rm Ker}(D)^\perp$ and $\psi$ is given by the function $\L \in x\mapsto x_1+1/2\in[0,1]$ as stated in Corollary \ref{melone28}. As  $\nabla_* \psi=e_1$, the claim follows from 
Theorem \ref{teo1}-(ii).}
\end{proof}

To clarify the link with homogenization and state our further results, it is convenient 
to rescale space in order to deal with fixed stripe and box. More precisely, we set $\e:= 1/\ell$.
Then $\e>0$ is our scaling parameter. We set \rosso{(recalling the definition of $\L$)}
\be\label{serpente}
\begin{cases}
 \L:=(-1/2,1/2)^d\,,
 & S:=\bbR\times (-1/2,1/2)^{d-1}\,,    \\
S^-:= \{x\in S: x_1 \leq -1/2\} \,, &S^+:=\{x\in S: x_1 \geq 1/2\}\,.
\end{cases}
\en
Note that $\L_\ell=\ell \L$, $S_\ell=\ell S$, $S_\ell^\pm=\ell S^\pm$.
Here and below, $\o\in \O_1$.
We write  $V _\e: \e \hat \o \cap S \to [0,1] $ for the function given by $V_\e (\e x):= V_\ell ^\o (x)$ (note that the dependence on $\o$ in $V_\e$ is understood, as for other objects below). 

We introduce  the atomic measures 
\be\label{atomiche}
\mu^\e_{\o,\L}:= \e^d \sum _{x \in \e\hat \o  \cap \L} \d_x\,,\qquad 
 \nu ^\e_{\o,\L } : = \e^d
  \sum_{(x,y)\in \cE_\e(\o)}   c_{x/\e,y/\e }(\o) \d_{( x ,(y-x)/\e)}\,,
\en
where  
\be\label{caprino67}
 \cE_\e(\o) :=\{ (x,y)\,:\, x, y \in \e \hat \o \cap S, \, \rosso{c_{x/\e,y/\e}(\o)}>0 \text{ and }\{x,y\}\cap  \L\not=\emptyset\}\,.
 \en 
Note that $\mu^\e_{\o,\L}$ and $ \nu ^\e_{\o,\L } $ have finite total mass (for the latter use that $\o\in \O_1$).

Given a function $f: \e \hat \o \cap S \to \bbR$, we define the \emph{amorphous gradient} $\nabla_\e f$ on pairs $(x,z)$ with  $x\in \e \hat \o\cap S$ 
and $x+ \e z\in \e \hat\o\cap S$ as 
\begin{equation}\label{ricola}
 \nabla_\e f(x,z)= \frac{ f(x+\e z)- f(x)  }{\e}\,.
 \end{equation}
Moreover,  we define the operator
\be\label{degregori} \bbL^\e_\o f( x):= \e^{-2} \sum _{y  \in \e\hat \o \cap S }  c_{x/\e,y/\e}\left[ f( y) - f( x)\right]\,, \qquad x\in \e\hat \o  \cap \L\,,
\en
whenever  the series in the r.h.s. is  absolutely convergent. 
Let  $f: \e \hat \o \cap S\to \bbR$  be a bounded function. 
Then 
  $\bbL^\e_\o f(x) $ is well defined for all $x\in 
  \e \hat \o\cap \L$  as $\o\in \O_1$. 
   As the amorphous gradient $\nabla_\e f$ is bounded too, we have that $\nabla_\e f\in 
 L^2(\nu^\e_{\o,\L})$. Moreover,  if in addition $f$ is zero outside $\L$,  it holds (cf. Lemma \ref{spiaggia})
\be\label{dir_form}
\la f, -\bbL^\e_\o f \ra _{L^2(\mu ^\e_{\o,\L})}
=
\frac{1}{2}\la \nabla _\e f, \nabla_\e f\ra _{L^2(\nu^\e_{\o,\L})}<+\infty \,.
\en

\begin{Definition}[Graph $\cG_\o^\e$ and sets $\cC_{\o,\L}^\e$,  $\cC_{\o}^\e$] \label{pinco52}
Given $\e>0$ and $\o\in \O_1$, we consider the non-oriented  graph $\cG_\o^\e$ with vertex set $\e \hat \o \cap S$ and edges given by the unordered pairs  $\{x,y\}$ such that $\rosso{c_{x/\e,y/\e}(\o)}>0$  and $\{x,y\}$ intersects $\L$.
We write  $\cC_{\o,\L}^\e$   and $\cC_{\o}^\e$
  for the union of the connected components in $\cG_\o^\e$   included in $\L$ and, respectively, intersecting $S^-\cup S^+$.
\end{Definition}

 $\cG_\o^\e$
equals the $\e$-rescaling of the graph obtained by disregarding the weights in the weighted graph (RN)$^\o_\ell$, where $\ell=1/\e$.
Note that $\cG_\o^\e$  has no edge between $\e \hat \o \cap S^-$ and $\e \hat \o \cap S^+$ and that $\e \hat \o \cap S=
\cC_{\o,\L}^\e \rosso{\sqcup}\,\cC_{\o}^\e$.  The edges of $\cG_\o^\e$ coincide with the edges obtained from $\cE_\e(\o)$ when disregarding the orientation.  \rosso{$\cC_{\o,\L}^\e$ is the family of points $\e x$ with $x\in \hat \o \cap \L_\ell$  not connected to $\hat \o \setminus \L_\ell$ in {\rm (RN)$^\o_\ell$}, while $\cC^\e_\o \setminus \L$
is the family of points $\e x$ with  $x\in \hat \o \cap S_\ell^-$ or $x\in \hat \o \cap S_\ell^+$ (cf.~\eqref{thebianco}).}

\begin{Definition}[Functional spaces  $H^{1,\e}_{0,\o}$, $K^{\e}_{\o}$]  \label{def_hilbert} Given $\o \in \O_1$
we define  the Hilbert space 

Since $\tilde \o \in \O_{\rm typ} \subset \cA[\l_2]\cap \cA_1[|z|^2]$,   we get that
$\lim_{\e\da 0} \int _{\L \setminus A_n} d\mu^\e_{\tilde \o}(x) \l_2 (\t_{x/\e} \tilde \o)= \ell(\L \setminus A_n )\bbE_0[\l_2]$.  As a byproduct with  \eqref{parasole} we get $
\lim_{n\uparrow \infty} \varlimsup_{\e\da 0} \text{ l.h.s. of } \eqref{manovra}=0$.
Due to \eqref{rigogolo}  we then obtain 
\be\label{kipur65}
\lim_{n \uparrow \infty} \varlimsup_{\e \da 0} \int d\nu^\e_{\tilde \o,\L} (x,z) \phi_n(x) \nabla \varphi(x) \cdot z \nabla_\e V_\e(x,z)=0\,.
\en

On the other hand,  due to \eqref{totani2a} and since  $\tilde \o \in \O_{\rm typ}$ (recall that the form $(\o,z) \mapsto z_i $ belongs to $\cH$ and  apply \eqref{yelena}), we can rewrite \eqref{kipur65} as 
\be\label{latte32}
\lim_{n \uparrow \infty}   \int _\L dx \int d\nu (\o,z)  \phi_n(x) \nabla \varphi(x) \cdot z  w (x,\o,z)=0\,.
\en
By Schwarz inequality,  the above integral differs from the same expression with $\phi_n(x)$ replaced with $1$ by at  most  $\| w\|  \| \phi_n-1\|_{L^2(\L, dx) } \|\nabla\varphi \|_\infty \bbE[\l_2]^{1/2}$, where $\|w\|$ is the norm of $w$ in $L^2(\L\times \O\times \bbR^d \,,\, m dx \times \nu )$. Hence, due to  \eqref{latte32},  $   \int _\L dx \int d\nu (\o,z)   \nabla \varphi(x) \cdot z  w (x,\o,z)=0$.
As a byproduct with  \eqref{mattacchione}, we conclude that $0= \int_\L dx \nabla \varphi (x) \cdot D \nabla_* v(x)= \int_\L dx \nabla_* \varphi (x) \cdot D \nabla_* v(x) $
 for any 
 $\varphi \in  C^2_c(\bbR^d)$ with $\varphi\equiv 0$ on  $S\setminus \L$ (we write $\varphi \in \cC$). 
 \rosso{Let us now take $ \varphi \in C^\infty _c(\bbR^d \setminus F)$. We call $\g>0$ the distance between the support of $\varphi$ and $F$ and we  fix a function $\theta:\bbR\to\bbR$ such that $\theta(u)=0$ for $|u|\geq 1/2+\g$ and $\theta(u)=1$ for $|u|\leq 1/2$. Then $\tilde \varphi(x):=u(x_1) \varphi(x) \in \cC$ and $  
 \varphi_{|\L}=\tilde{\varphi}_{|\L}$. Therefore, by the previous observations, $0= \int_\L dx \nabla_* \varphi (x) \cdot D\nabla_* v(x) $ and by density the same holds}
   for any $\varphi \in H_0^1(\L,F,d_*)$.
 Due to Proposition  \ref{diamanti} we  also have that $v\in K$ (cf. \eqref{kafka} in Definition~\ref{vettorino}).  
 Hence, \rosso{$v$ is the unique weak solution   of the equation $ \nabla_* \cdot ( D \nabla_* v ) =0$ 
with boundary conditions  \eqref{mbc}, i.e. $v=\psi$}.

\rosso{From Lemmas \ref{compatto1} and \ref{compatto2} one easily derives the same converging result
stating there but starting from  a sequence of functions parametrized by $\e_k\da 0$ (the convergence is along a subsequence). Also dealing with sequences and subsequences, by the above results the limit point is always $\psi$. Hence} we get that $V_\e  \in L^2(\mu^\e_{\tilde \o, \L}) $ weakly 2-scale converges to  $\rosso{\psi} \in   L^2(\L \times \O, m  dx \times \cP_0 ) $ as $\e\da 0$. As $\rosso{\psi}$ does not depend from $\o$ and since $1\in \cG$, we derive from \eqref{rabarbaro} that $L^2(\mu^\e_{\tilde \o, \L})\ni V_\e \toup \rosso{\psi} \in L^2( \L, m dx)$ according to Definition \ref{debole_forte}.  \rosso{This concludes the proof of Item (i) in Proposition~\ref{teo2}. Having now identified $v$,  the discussion following \eqref{mattacchione} with $v=\psi$ leads to  Item (ii)}.

\subsection{Proof of Theorem \ref{teo1} \rosso{for $e_1\in {\rm Ker}(D)^\perp$}}\label{platano} Let us show that, given  
 $\tilde \o \in \O_{\rm typ}$, it holds
$\lim_{\e\da 0}  \frac{1}{2} \la \nabla_\e V_\e, \nabla _\e V_\e \ra 
_{L^2(\nu^\e_{\tilde \o, \L})}=m\rosso{\int_\L D \nabla_* \psi (x) \cdot \nabla_* \psi(x)dx}
$ (cf.~\eqref{mortisia}).
To this aim we  apply Remark \ref{sirenetta} to get that  $\la \nabla_\e( V_\e - \z), \nabla_\e V_\e \ra_{L^2(\nu^\e_{\tilde \o, \L})} =0$. This implies that 
\be \label{antiguaz}
\la \nabla_\e V_\e, \nabla_\e V_\e\ra_{L^2(\nu^\e_{\tilde \o, \L})}  = \la \nabla_\e \z, \nabla_\e V_\e \ra_{L^2(\nu^\e_{\tilde \o, \L})}\,.
\en
\begin{Claim} \label{marlena93}
It holds $\lim _{\e\da 0} \int d \nu^\e_{\tilde \o, \L}(x,z) |\nabla_\e \z(x,z)- z_1 |^2 =0$.
\end{Claim}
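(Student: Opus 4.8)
The plan is to show that the integrand $|\nabla_\e\psi(x,z)-z_1|^2$ is supported on the ``mixed'' filaments joining $\L$ to $S^-\cup S^+$, and is there dominated by $|z|^2$; the statement then reduces to the boundary-layer and long-filament estimates already carried out for the term $C_-(\e)$ in Section~\ref{renato_zero}.

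First I would record that $\psi$ depends only on the first coordinate: $\psi(x)=g(x_1)$ with $g(t):=\max\{0,\min\{1,t+\tfrac12\}\}$. Setting $h(t):=g(t)-t$, one has, for $(x,z)$ in the support of $\nu^\e_{\tilde\o,\L}$ and $y:=x+\e z$,
\[
\nabla_\e\psi(x,z)-z_1=\frac{g(y_1)-g(x_1)-(y_1-x_1)}{\e}=\frac{h(y_1)-h(x_1)}{\e}\,.
\]
Since $h$ is constant ($=\tfrac12$) on $[-\tfrac12,\tfrac12]$ and $1$-Lipschitz everywhere (its slopes are $-1,0,-1$), this quantity vanishes whenever both $x,y\in\L$, and otherwise
\[
|\nabla_\e\psi(x,z)-z_1|=\frac{|h(y_1)-h(x_1)|}{\e}\le\frac{|y_1-x_1|}{\e}=|z_1|\le|z|\,.
\]
As every $(x,z)$ in the support corresponds to a pair $\{x,y\}$ with $\{x,y\}\cap\L\neq\emptyset$, the nonvanishing contributions come exactly from filaments with one endpoint in $\L$ and the other in $S^-\cup S^+$.

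Combining the support property with the Lipschitz bound, and accounting for the two orderings of each unordered mixed pair, I would obtain
\[
\int d\nu^\e_{\tilde\o,\L}(x,z)\,|\nabla_\e\psi(x,z)-z_1|^2\le 2\bigl(J_-(\e)+J_+(\e)\bigr)\,,
\]
\[
J_\mp(\e):=\e^{d}\sum_{x\in\hat{\tilde\o}\cap\e^{-1}\L}\ \sum_{y\in\hat{\tilde\o}\cap\e^{-1}S^\mp}c_{x,y}(\tilde\o)\,|y-x|^2\,,
\]
which is precisely the quantity treated in Section~\ref{renato_zero} (compare \eqref{nike}).

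For $J_-(\e)$ I would transcribe the $C_-(\e)$ argument: fix $\rho\in(0,1)$, put $\L_\rho:=\rho\L$, and split according to $x\in\e^{-1}\L_\rho$ or $x\in\e^{-1}(\L\setminus\L_\rho)$. In the first case any $y\in\e^{-1}S^-$ forces $|y-x|_\infty\ge(1-\rho)/2\e$, so the inner sum is bounded by $g_{(1-\rho)/2\e}(\t_x\tilde\o)$ with $g_\ell(\o):=\sum_{z\in\hat\o}c_{0,z}(\o)|z|^2\mathds{1}(|z|_\infty\ge\ell)$; ergodicity (\eqref{lattone1}, Remark~\ref{dono}, using $\tilde\o\in\O_{\rm typ}$), (A7) and dominated convergence then give that the $\varlimsup_{\e\da0}$ of this contribution is at most $\varlimsup_{\ell\uparrow\infty}m\rho^d\bbE_0[g_\ell]=0$. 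In the second case one bounds the inner sum by $\l_2(\t_x\tilde\o)$ and invokes Proposition~\ref{prop_ergodico} to get convergence to $m\,\bbE_0[\l_2]\,\ell(\L\setminus\L_\rho)$, which tends to $0$ as $\rho\uparrow1$. The same reasoning handles $J_+(\e)$. Hence $J_\mp(\e)\to0$ and the claim follows. The only genuinely load-bearing step is the elementary reduction of the first paragraph (the support and Lipschitz properties of $\psi$); once it is in place, the remainder is a direct repetition of estimates already established for $D_{1,1}=0$, so no new difficulty appears.
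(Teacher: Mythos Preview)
Your proof is correct and follows essentially the same route as the paper: identify that the integrand vanishes for filaments inside $\L$ and is bounded by $z_1^2$ (you use $|z|^2$, which is harmless) on mixed filaments, then split the remaining sum into a boundary-layer part controlled by $\l_2$ on $\L\setminus\L_\rho$ and a long-filament part controlled by $f_n$. Your Lipschitz reduction via $h(t)=g(t)-t$ treats the four mixed cases (a)--(d) in one line, whereas the paper does case~(a) explicitly and says the others are similar; and you refer back to the $C_-(\e)$ estimate of Section~\ref{renato_zero}, whereas the paper repeats that computation in place---but these are presentational choices, not a different strategy.
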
 
\begin{proof} If $x, x+\e z\in \L$, then $\nabla_\e \z(x,z)=z_1$.  We have only 4 relevant alternative cases: 
(a) 
$x\in \L$, $x+\e z\in S^+$; (b) $x\in S^+$, $x+\e z\in \L$; (c) $x\in \L$, $x+\e z\in S^-$; (d) $x\in S^-$, $x+\e z\in \L$.
Below we treat only \rosso{cases (a) and (b)}, since the other cases can be treated similarly. \rosso{We first} assume (a) to hold. Then $x_1+\frac{1}{2}=\z(x) \leq \z (x+\e z) \leq x_1+\e z_1+\frac{1}{2}$ and therefore $0\leq \nabla_\e \z (x,z)  \leq z_1$. This implies that $|\nabla_\e \z(x,z)- z_1 |^2\leq z_1^2$.
Fix $\d \in (0,1/2)$ and set $\L_\d:=(-1/2+\d, 1/2-\d)^{d}$. 
Given $n\in \bbN$, for $\e$ small we can bound
\be\label{mary1}
\begin{split}
  & \int d \nu^\e_{\tilde \o, \L}(x,z) |\nabla_\e \z(x,z)- z_1 |^2 \mathds{1}( x\in \L_\d, x+\e z\in S^+
 )\\
& \leq  \int  d \nu^\e_{\tilde \o}(x,z) z_1^2 \mathds{1}(x\in \L_\d , z_1 \geq \d /\e)
\leq   \int_{\L_\d} d \mu^\e_{\tilde \o} (x)   f_n(\t_{x/\e}\tilde\o) \,\rosso{,}
\end{split}
\en
where  $f_n(\o)= \sum _{z\in \hat\o} c_{0,z}|z|^2  \mathds{1}(|z| \geq n)$ (recall \eqref{hesse1}). By Proposition \ref{prop_ergodico} and since $ \O_{\rm typ}\subset \cA[f_n]\cap \cA_1[|z|^2]$, 
as $\e \da 0$ the last integral in \eqref{mary1} converges to 
$m(1-2\d)^d \bbE_0[f_n]
$, which goes to   zero as $n\uparrow +\infty$ due to (A7). This allows to conclude that   the l.h.s. of \eqref{mary1} converges to zero as $\e \da 0$.

We can bound
\be \label{mary2}
\begin{split}
 & \int d \nu^\e_{\tilde \o, \L}(x,z) |\nabla_\e \z(x,z)- z_1 |^2 \mathds{1}( x\in\L\setminus  \L_\d, x+\e z\in S^+
 )
\\
& \leq  \int d \nu^\e_{\tilde \o}(x,z) z_1^2 \mathds{1}( x\in \L\setminus \L_\d) \leq  \int _{\L\setminus \L_\d} d \mu^\e_{\tilde \o}(x) \l_2(\t_{x/\e}\tilde \o)\,.
\end{split}
\en
By Proposition \ref{prop_ergodico} and since  $\O_{\rm typ}\subset \cA[\l_2]\cap \cA_1[|z|^2]$, as $\e\da 0$ the last integral in \eqref{mary2} converges to    $\ell(\L\setminus \L_\d) \bbE_0[\l_2]$, which goes to zero as $\d\da 0$ by (A7). 

The above results allow to conclude \rosso{that the  contribution to $\int d \nu^\e_{\tilde \o, \L}(x,z) |\nabla_\e \z(x,z)- z_1 |^2 $ of $(x,z)$ as in  case (a)  is negligible as $\e \da 0$. Let us prove the same result with $(x,z)$ as in  case (b): $x\in S^+$, $x+\e z\in \L$.  Since 
$x_1+\e z_1+\frac{1}{2}= \z (x+\e z)\leq \z(x)=1\leq x_1+\frac{1}{2}$, we have  $z_1\leq \nabla_\e \z(x,z) \leq 0$ and therefore $|\nabla_\e \z(x,z)- z_1 |^2\leq z_1^2$. On the other hand,  by  \cite[Lemma~11.3]{Fhom3} we get 
\be
\int d \nu^\e_{\tilde \o, \L}(x,z) \mathds{1}(x\in S^+) \mathds{1}(x+\e z \in \L)z_1^2=
\int d \nu^\e_{\tilde \o, \L}(x,z) \mathds{1}(x+\e z \in S^+) \mathds{1}(x \in \L)z_1^2\,,
\en
and the last term goes to zero as $\e\da 0$ due to the estimates used to treat case (a).}
\end{proof}
As a  byproduct of Claim \ref{marlena93}, Lemma \ref{paletta}  and \eqref{antiguaz}, we get 
\be\label{antigua}
\lim_{\e\da 0}\la \nabla_\e V_\e, \nabla_\e V_\e\ra_{L^2(\nu^\e_{\tilde \o, \L})}  =   \lim_{\e \da 0} \int d \nu^\e_{\tilde \o,\L} (x,z) z_1 \nabla_\e V_\e (x,z)\,.
\en
By applying Schwarz inequality as in \eqref{manovra}, we get that \be\label{antigua_bis}
\lim_{\e \da 0}  \int d \nu^\e_{\tilde \o,\L} (x,z) z_1 \nabla_\e V_\e (x,z)=
\lim_{n \uparrow \infty}
\lim_{\e \da 0} \int d \nu^\e_{\tilde \o,\L} (x,z) \phi_n(x) z_1 \nabla_\e V_\e (x,z)\,.
\en

\rosso{By Proposition~\ref{teo2}  $L^2(\mu^\e_{\tilde \o, \L})   \ni V_\e \stackrel{2}{\toup} \rosso{\psi}  \in L^2(\L \times \O,  m dx \times \cP_0 ) $ and $L^2(\nu^\e_{\tilde \o, \L})\ni   \nabla_\e V_\e \stackrel{2}{\rightharpoonup}w  \in  L^2(\L \times \O\times\bbR^d,  m dx \times \nu) $}. Since $\phi_n \in C_c(\L)$,  
as a byproduct of \eqref{antigua} and \eqref{antigua_bis} we obtain that 
\be\label{emoji}
\begin{split}
\lim_{\e\da 0}  \la \nabla_\e V_\e, \nabla_\e V_\e\ra _{L^2(\nu^\e_{\tilde \o, \L})} &  =   \lim _{n\uparrow \infty} \int _{\L} dx\,   m
\phi_n(x) \int d\nu  (\o,z) z_1   w(x,\o,z)\\
& =    \int _{\L} dx\,  m
 \int d\nu  (\o,z) z_1  w(x,\o,z)\,.
\end{split}
\en
Due to \eqref{mattacchione} \rosso{and since $v$ there equals $\psi$ due to Proposition~\ref{teo2}}, the last term  \rosso{in \eqref{emoji}} equals 
\[
\rosso{
m \int_\L 2  D \nabla_* \psi (x) \cdot e_1 dx= 
2 m \int_\L   D \nabla_* \psi (x) \cdot \nabla_* \z (x) dx}\,.
\]
\rosso{To conclude the proof of Theorem \ref{teo1}, it is enough to show that 
\be\label{finalissima}
\int_\L  D \nabla_* \psi(x) \cdot \nabla_* \z (x) dx=\int_\L  D \nabla_* \psi (x) \cdot \nabla_* \psi (x) dx\,.
\en
Since $\psi \in K$, $f:= \psi-\z$  belongs to $H^1_0(\L,F,d_*)$. As a byproduct of this observation and  Definitions \ref{fete} and \ref{rondini25}, we conclude that  $\int_\L  D \nabla_* \psi(x) \cdot \nabla_* f (x) dx=0$, which is equivalent to \eqref{finalissima}.
}

\appendix

\section{Canonical procedure for the reduction to the case $\bbG=\bbR^d$}\label{app_porto}
\rosso{In this appendix we give some more details on the procedure mentioned in Section \ref{sec_riduzione}.}
 Let us consider the context of Theorem \ref{teo3} with $\bbG=\bbZ^d$. Again, without loss of generality, we take $\t_g x=x+g$ for all $x\in \bbR^d$ and $g\in \bbZ^d$.  Similarly, we have the translation action of the group $\bbR^d$ given by  $\bar \t_g x:= x+g$ for all $x,g\in \bbR^d$.   We consider the probability space $\bar\O:=\O\times [0,1)^d$ with probability $\bar \cP:= \cP\times dx$  and  $\s$--field $\bar{\cF}:=\cF \otimes \cB([0,1)^d)$, $\cB([0,1)^d)$ being the family of  Borel sets of $[0,1)^d$. 
 Given $y\in \bbR^d$, we define the integer  part $[y]$  of $y$ as  the unique element $z\in \bbZ^d$ such that $y\in z+[0,1)^d$. We then set $\b(y):=y-[y]\in [0,1)^d$. The group $\bbR^d$ acts on $\bar \O$ by means of the maps $(\bar \t_g) _{g\in \bbR^d}$ with  $\bar \t _g (\o,a):= ( \t_{[g+a]} \o, \b (g+a))$.
 Finally, to $(\o,a)\in \bar \O$ we associate the locally finite set $\widehat{(\o,a)} :=\hat \o-a$ and consider the  conductances $\bar c_{x-a,y-a}(\o,a):= c_{x,y} (\o)$ for $x,y\in \hat \o$.
%
Then, under the assumptions of Theorem \ref{teo3}, by    the observations  collected in \cite[Section~6]{Fhom3}   the setting given by $(\bar\O, \bar \cF, \bar{\cP})$, the group $\bar \bbG:=\bbR^d$ with the above two actions, the simple point process $\widehat{\bar{\omega}}$ and  the conductance field $\bar c_{x,y}(\bar \o)$ satisfies assumptions (A1),...,(A8).
Hence, once proven  Theorem \ref{teo1} and Proposition \ref{teo2} for $\bbG=\bbR^d$,  
\rosso{one can apply them to the above case getting the desired convergences 
for all environments in a translation invariant  measurable set $\bar \O_{\rm typ}$ (for the $\bbR^d$--action)  with $\bar \cP({\bar\O}_{\rm typ})=1$.}
As discussed in  \cite[Section~6]{Fhom3},  $ \bar m  =m$ and $\bar D = D$.
  On the other hand, since $ {\bar \O}_{\rm typ}$ is translation invariant, for $(\o,a)\in \bar\O_{\rm typ}$ we have $(\o,0)= \bar \t_{-a} (\o,a)\in   {\bar \O}_{\rm typ}$. Note also that $\rosso{\bar\s}_\ell(\o,0) = \s_\ell (\o)$ as $\widehat{(\o,0)}=\hat\o$ and $\bar c_{x,y}(\o,0)=c_{x,y}(\o)$ for $x,y\in \hat \o$.  
  Hence, if $(\o,a) \in {\bar \O}_{\rm typ}$, then  \rosso{$\lim _{\ell \to +\infty} \ell^{2-d} \s_\ell( \o)= \lim _{\ell \to +\infty} \ell^{2-d} \bar\s_\ell( \o,a)$}. One can then check that Theorem \ref{teo1} is fulfilled with  $\O_{\rm typ}:=\{ \o \in \O\,: (\o,a) \in \bar \O_{\rm typ} \text{ for some } a\in [0,1)^d\}$ ($\O_{\rm typ}$ is translation invariant as $(\t_g \o, a)=\bar \t_g (\o,a)$ for all $g\in \bbZ^d$, while the issue concerning $\O_1$ can be settled using  that $\bar \l_k (\o,a)= \l_k (\o)$ as observed
   in  \cite[Section~6]{Fhom3}). By similar arguments one obtains Proposition \ref{teo2}.

\bigskip
{\bf Acknowledgments} \rosso{I thank the anonymous referee for his/her corrections and  very stimulating comments and suggestions.}  I thank  Paul Dario  and Pierre Mathieu for some  useful comments. I thank the numerous colleagues with whom I have discussed some references. \rosso{I thank Pierpaolo Gabrielli for providing some pictures}. 
 I thank my sons  Lorenzo and Pierpaolo  for their help on 1st July 2021. \rosso{This work is first of all for them}.



\begin{thebibliography}{4}










\bibitem{Abe} Y.\ Abe; \textit{Effective resistances for supercritical percolation clusters in boxes}. Ann. Inst. H. Poincar\'{e} Probab. Statist. \textbf{51}, 935-946 (2015).




\bibitem{A} \rosso{G.\ Allaire; \textit{Homogenization and two--scale convergence}. SIAM J. Math. Anal. \textbf{23}, 1482--1518 (1992).}

\bibitem{AC} \rosso{R.\ Alicandro, M.\ Cicalese; \textit{A general integral representation result for continuum limits of discrete energies with superlinear growth}. SIAM J.~Math.~Anal.~\textbf{36}, 1--37 (2004).}

\bibitem{ACG} \rosso{R.\ Alicandro, M.\ Cicalese, A.\ Gloria;  \textit{Integral representation results for energies defined on stochastic lattices and application to nonlinear elasticity}. Arch.~Rational~Mech.~Anal.~\textbf{200}, 881--943 (2011).}

\bibitem{AD} 
S.\ Armstrong, P.\ Dario; \textit{Elliptic regularity and quantitative homogenization on percolation clusters}. Comm. Pure Appl. Math. \textbf{71}, 1717--1849 (2018).

\bibitem{AKM} S.\ Armstrong, T.\ Kuusi, J.-C.\ Mourrat; \textit{Quantitative stochastic homogenization and large-scale regularity}. Grundlehren der mathematischen Wissenschaften \textbf{352}, Springer Verlag, 2019.

\bibitem{AHL} V.~Ambegoakar, B.I.~Halperin, J.S.~Langer; \textit{Hopping conductivity in disordered systems}. Phys. Rev. B \textbf{4}, 2612-2620 (1971).


\bibitem{Bi} M.~Biskup; \textit{Recent progress on the random conductance model}.  
Probability Surveys, Vol.~\textbf{8},  294-373
(2011).



\bibitem{BB}
N.~Berger, M.~Biskup; \textit{Quenched invariance principle for simple random walk on percolation clusters}. Probab. Theory Rel. Fields \textbf{137},  83--120 (2007).



\bibitem{BSW} M.~Biskup, M.~Salvi, T.~Wolff; \textit{A central limit theorem for the effective conductance: linear boundary data and small ellipticity contrasts}. Commun. Math. Phys. \textbf{328},  701-731 (2014).

 
 
\bibitem{BBL} \rosso{X.~Blanc, C.L.~Bris, P.-L.~Lions; \textit{The energy of some microscopic stochastic lattices}. Arch. Rational. Mech. Anal. \textbf{184}, 303--339 (2007).}


\bibitem{BMW} \rosso{A.~Bourgeat, A.~Mikelic,  S.~Wright; \textit{Stochastic two-scale convergence in the mean and applications}. Journal f\"ur die reine und angewandte Mathematik  \textbf{456},  19--52
(1994).}



 \bibitem{B} \rosso{A.~Braides; \textit{$\Gamma$--convergence for beginners}.  Oxford Lecture Series in
Mathematics and Its Applications \textbf{22}. Oxford University Press, Oxford, 2002}.


\bibitem{BC} \rosso{A.~Braides, M.~Caroccia; \textit{Asymptotic behavior of the Dirichlet energy on Poisson point clouds.} J. Nonlinear Sci. \textbf{33}, 80 (2023).}

\bibitem{BF} \rosso{A.~Braides, G.A.~Francfort; \textit{Bounds on the effective behaviour of a square conducting lattice}. Proc. R. Soc. Lond. A. \textbf{460}, 1755--1769 (2004). }


\bibitem{BP}  \rosso{A.~Braides, A.~Piatnitski; \textit{Homogenization of ferromagnetic energies on poisson random sets in the plane}. Arch. Rational. Mech. Anal.  \textbf{243}, 433-458 (2022).}


\bibitem{Br}  H.~Brezis; \textit{Functional Analysis, Sobolev Spaces and Partial Differential Equations}.  New York, Springer Verlag, 2010. 

 \bibitem{CP1}  P.~Caputo, A.~Faggionato; \textit{Diffusivity of 1--dimensional
generalized Mott variable range hopping}. Ann. Appl. Probab. \textbf{19}, 1459--1494 (2009).	


%

\bibitem{CC}
J.T.~Chayes, L.~Chayes; \textit{Bulk transport properties and exponent inequalities for random resistor and flow networks}. Commun. Math. Phys. \textbf{105}, no. 1, 133--152 (1986).


\bibitem{DV} D.J.~Daley, D.~Vere-Jones;  \textit{An Introduction to the Theory of Point Processes}. New York, Springer Verlag, 1988.


%
%

%
%
%
%


  \bibitem{demasi} A.\ De Masi, P.A.\ Ferrari, S.\ Goldstein, W.D.\ Wick; \textit{An invariance principle for reversible Markov processes. Applications to random motions in random environments.}  J. Stat. Phys. \textbf{55}, 787--855 (1989).


\bibitem{DS} P.G.~Doyle, J.L.~Snell; \textit{Random walks and electric networks}. Carus Mathematical Monographs, Mathematical Association of America  (1984). 




\bibitem{EG}  L.C.~Evans, R.F.~Gariepy;  \textit{Measure theory and fine properties of functions}. Boca Raton, CRC, 1992.


\bibitem{1luglio} A.~Faggionato, \textit{Miller-Abrahams random resistor network, Mott random walk and 2-scale homogenization}. Unpublished notes, arXiv:2002.03441 (2020).
%

\bibitem{F_SEP} 
\rosso{A.~Faggionato; \emph{Hydrodynamic limit of simple exclusion processes in symmetric random environments via duality and homogenization}. Probab. Theory Relat. Fields. {\bf 184}, 1093--1137 (2022).}


\bibitem{Fhom3}  A.\ Faggionato;  \textit{Stochastic homogenization of random walks on point processes.}    \rosso{Ann. Inst. H. Poincar\'e  Probab. Statist. \textbf{59}, 662-705 (2023)}.


\bibitem{Mott_sinfonia} A.\ Faggionato; \textit{Mott's law for the  Miller--Abrahams random resistor network and for  Mott's random walk}. \rosso{Preprint arXiv:2301.06318 (2023).}

\bibitem{For_Francis} \rosso{A.\ Faggionato; \textit{Random walks and symmetric SEPs  on random graphs in $\bbR^d$ with random conductances: homogenization and hydrodynamics}. In preparation.}


%
%
%


%
\bibitem{FAMH1}
 A.~Faggionato, H.A.~Mimun; \textit{Connection probabilities in the Poisson Miller--Abrahams random resistor network and other Poisson random graphs with bounded edges}.  ALEA, Lat. Am. J. Probab. Math. Stat. \textbf{16}, 463-486 (2019).
%
%
%

%




\bibitem{FAMH2}
 A.~Faggionato, H.A.~Mimun;
\textit{Left-right crossings in the Miller-Abrahams random resistor network and in generalized Boolean models}. 
Stoch. Process. their Appl. \textbf{137}, 62--105 (2021).
 
 
 
 \bibitem{FM}
 A.~Faggionato, P.~Mathieu; \textit{Mott law as upper bound for a random
walk in a random environment}. Commun. Math. Phys. \textbf{281}, 263--286
(2008). 
 
\bibitem{FSS}  A.~Faggionato, H.~Schulz--Baldes, D.~Spehner; \textit{Mott law as lower bound for a random walk in a random environment.} Commun.\ Math.\ Phys., \textbf{263},
     21--64 (2006).


\bibitem{FT} A.~Faggionato, C.~Tagliaferri; \textit{Random walks, simple exclusion processes and random resistor networks on  Delaunay triangulations}. In preparation.

\bibitem{G} G.~Grimmett,  \textit{Percolation}. Second edition. 	Die Grundlehren der mathematischen Wissenschaften \textbf{321}.  Springer Verlag, Berlin, 1999.
 

\bibitem{G_book} G.~Grimmett, \textit{Probability on graphs. Random processes on graphs and lattices}. Second edition. Cambridge University Press, Cambridge, 2018.


\bibitem{GK} G.~Grimmett, H.~Kesten; \textit{First-passage percolation, network flows and electrical resistances}. Z. Wahrsch. Verw. Gebiete \textbf{66}, no. 3, 335--366 (1984).



\bibitem{GM} G.R. Grimmett, J.M. Marstrand; \textit{The supercritical phase of percolation is well behaved}. Proc. R. Soc. Lond. A \textbf{430}, 439--457 (1990).





%


\bibitem{JKO}  V.V.~Jikov, S.M.~Kozlov, O.A.~Oleinik; \textit{Homogenization of differential operators and integral functionals}. Berlin, Springer Verlag, 1994.

\bibitem{Ke} H.~Kesten; \textit{Percolation theory for mathematicians}. Birkh\"auser, Boston, 1982.


\bibitem{Kirk} S.~Kirkpatrick; \textit{Percolation and conduction}. Rev. Modern Phys. \textbf{45}, 574-588 (1973).


\bibitem{Koz78} S.M.~Kozlov; \textit{Averaging of random structures}. Dokl. Akad. Nauk SSSR \textbf{241} (5), 1016--1019
(1978). English transl.: Soviet Math. Dokl. \textbf{19} (4), 950--954 (1978)



\bibitem{Koz} S.M.~Kozlov;  \textit{Averaging of difference schemes}.
Math. USSR Sbornik 
\textbf{57} (2)  (1987).


\bibitem{Kunn} R.~K\"unnemann; \textit{The diffusion limit for reversible jump processes on 7/e with ergodic random
bond conductivities}. Comm. Math. Phys. \textbf{90}, 27--68 (1983).


\bibitem{NL}  S.~Lundqvist ed. \textit{Nobel Lectures In Physics, 1971-1980}.  Vol. \textbf{5}. World Scientific, 1982. See also https://www.nobelprize.org/uploads/2018/06/mott-lecture.pdf



\bibitem{MP} P.~Mathieu, A.~Piatnitski; \textit{Quenched invariance principles for random walks on percolation clusters}. Proc. R. Soc. A \textbf{463} (2007).



\bibitem{MR}  R. Meester, R. Roy.  \emph{Continuum percolation}. Cambridge Tracts in Mathematics {\bf 119}. First edition, Cambridge University Press, Cambridge, 1996.



\bibitem{MA}  A.~Miller, E.~Abrahams; \textit{Impurity Conduction at Low Concentrations}. Phys. Rev. \textbf{120}, 745--755 (1960).








\bibitem{M} J.~M\so ller; 
\textit{Lectures on Random Voronoi Tessellations}.  Lecture Notes in Statistics \textbf{87},  Springer Verlag, New  York, 1994.



\bibitem{Nu} \rosso{G.~Nguetseng; \textit{A general convergence result for a functional related to the theory of homogenization}. SIAM J. Math. Anal.  \textbf{20}, 608--623 (1989).}




\bibitem{PV}
G.C.~Papanicolaou, S.R.S.~Varadhan; \textit{Boundary value problems with rapidly oscillating random coefficients}. Proceedings of Conference on Random Fields, Esztergom, Hungary, 1979;  S\'eria Colloquia Mathematica Societatis Janos Bolyai \textbf{27},  835-873, North Holland, 1981.



\bibitem{PR} \rosso{A.~Piatnitski, E~Remy; \emph{Homogenization of elliptic difference operators}. SIAM J.~Math. Anal. {\bf 33}, 53-83 (2001).}

\bibitem{POF} M.~Pollak, M.~Ortu{\~n}o, A.~Frydman; \textit{The electron glass}.  Cambridge University Press, United Kingdom, 2013.




\bibitem{SE}
S.~Shklovskii, A.L.~Efros; \textit{Electronic Properties of Doped Semiconductors}. Springer Verlag, Berlin, 1984.

%
%
%



\bibitem{V} \rosso{M.\ Vogelius; \textit{A homogenization result for planar, polygonal networks}. RAIRO Mod\'el. Math. Anal. Num\'er., \textbf{25}, 483--514 (1991).}


\bibitem{Z0} V.V.~Zhikov; \textit{Averaging in perforated random domains of general type}. (Russian) Mat. Zametki
\textbf{53}:1, 41--58 (1993); translation in  Math. Notes \textbf{53}, 30--42 (1993).
 
 
\bibitem{Z}  V.V.~Zhikov; \textit{On an extension of the method of two-scale convergence and its applications}. (Russian) Mat. Sb. \textbf{191}, no. 7, 31--72 (2000); translation in Sb. Math. \textbf{191}, no. 7-8, 973--1014 (2000).




\bibitem{ZP} V.V.~Zhikov, A.L.~Pyatnitskii; \textit{Homogenization of random singular structures and random measures}. 
 (Russian) Izv. Ross. Akad. Nauk Ser. Mat. \textbf{70}, no. 1, 23--74 (2006); translation in Izv. Math. \textbf{70}, no. 1, 19--67 (2006).

\end{thebibliography}


\end{document}